\newtheorem{definition}{Definition}
\newtheorem{theorem}{Theorem}
\newtheorem{lemma}{Lemma}
\title{Fairness in the Assignment Problem with Uncertain Priorities\footnote{This work is supported by NSF grant CCF-2113798.}}
\newcommand*\samethanks[1][\value{footnote}]{\footnotemark[#1]}
\author{Zeyu Shen\thanks{Computer Science Department, Duke University, Durham, NC 27708-0129. Email: \texttt{\{zeyu.shen,zhiyi.wang,xingyu.zhu\}@duke.edu}, \texttt{\{btfain,kamesh\}@cs.duke.edu}.} \and Zhiyi Wang\samethanks[1] \and Xingyu Zhu\samethanks[1] \and Brandon Fain\samethanks[1] \and Kamesh Munagala\samethanks[1]}
\date{}
\begin{document}

\maketitle

\begin{abstract}
    In the assignment problem, a set of items must be allocated to unit-demand agents who express ordinal preferences (rankings) over the items. In the assignment problem with priorities, agents with higher priority are entitled to their preferred goods with respect to lower priority agents. A priority can be naturally represented as a ranking and an uncertain priority as a distribution over rankings. For example, this models the problem of assigning student applicants to university seats or job applicants to job openings when the admitting body is uncertain about the true priority over applicants. This uncertainty can express the possibility of bias in the generation of the priority ranking. We believe we are the first to explicitly formulate and study the assignment problem with uncertain priorities. We introduce two natural notions of fairness in this problem: stochastic envy-freeness (SEF) and likelihood envy-freeness (LEF). We show that SEF and LEF are incompatible and that LEF is incompatible with ordinal efficiency. We describe two algorithms, Cycle Elimination (CE) and Unit-Time Eating (UTE) that satisfy ordinal efficiency (a form of ex-ante Pareto optimality) and SEF; the well known random serial dictatorship algorithm satisfies LEF and the weaker efficiency guarantee of ex-post Pareto optimality. We also show that CE satisfies a relaxation of LEF that we term 1-LEF which applies only to certain comparisons of priority, while UTE satisfies a version of proportional allocations with ranks. We conclude by demonstrating how a mediator can model a problem of school admission in the face of bias as an assignment problem with uncertain priority.
\end{abstract}

\section{Introduction}

Consider a motivating example of the assignment problem where a number of university admission slots (the items) must be assigned to student applicants (the agents). The university slots could be at a single university or several. Applicants might have preferences over different universities, or might have preferences over different slots at the same university (for example, some slots might be associated with merit-based financial aid, or include admission to particular academic programs). Applicants are \textit{unit-demand}, meaning they only need to be assigned a single slot (and derive no benefit from being assigned multiple).

Most university systems employ some form of priority-based admissions; this can be expressed through a ranking over applicants. For example, a priority might rank applicants by standardized exam scores, or perhaps by some more complex holistic assessment. Given any deterministic priority (a ranking), one might naturally  solve the assignment problem using the \textit{serial dictatorship} rule, so that students choose their most preferred remaining university slot one at a time in order of their standardized exam score. Indeed, systems roughly like this are employed in several countries around the world such as the Indian Institutes of Technology~\cite{JET-IIT}.

Despite the appeal of such a simple and ostensibly fair system, there is reason to suspect that any scoring or ranking system is based on imperfect noisy signals of the true underlying priority (whatever that might be). For example, an applicant A scoring 1 point higher on a standardized exam or holistic assessment than another applicant B is not, in general, 100\% more likely to be a better student than B. Even more worryingly, studies show that standardized exam performance is closely related to demographic factors such as race and income~\cite{DEM2013}, leading to uncertainty based on social bias and inequality in addition to random noise like whether one had a good breakfast the day of an exam. More holistic assessments are further vulnerable to the well documented phenomenon of implicit bias against historically marginalized groups~\cite{ImplicitBias}. Ignoring these uncertainties may result in arbitrary decisions (deterministically preferring one applicant over another when the comparison is unclear and noisy) and systemic discrimination against historically marginalized groups.

Previous work has attempted to solve the second problem of bias without explicitly modeling an uncertain priority by adapting the so-called ``Rooney Rule''~\cite{kleinberg2018, celis2020interventions}. There are variations, but roughly speaking these methods reserve a number of ``minority'' spots and prioritize this many ``minority'' applicants in some serial dictatorship assignment. This approach can lead to \textit{fairness gerrymandering}~\cite{Kearns18} by which structured subgroups remain disadvantaged. In particular, Rooney Rule style approaches are predicated on a single binary distinction of the applicant population into ``majority'' (or privileged) and ``minority'' (or disadvantaged) applicants. But in reality, applicant identity is multidimensional (race, gender, income, disability, first language, etc.) and bias  can compound along intersections. In fact, it is perfectly plausible that the vast \textit{majority} of applicants are disadvantaged (that is, suffer from bias leading to underestimation of their priority) along one or more dimensions of identity, though not all to the same extent. In addition to group identity, there may sometimes be uncertainties related to the priority of individual applicants, unique circumstances that merit accounting. 

For these reasons, we consider the more general problem that takes as input an uncertain priority, expressed as a probability distribution over rankings of applicants. The generality of the input to our algorithms ensures that a decision maker can fully model the complexity of uncertainty and bias inherent in the creation of a priority. This modeling problem is outside the scope of this paper, though we do provide an example for our experiments in Section~\ref{sec:experiments}. Rather, our emphasis is on the question of characterizing fairness and efficiency given a random priority, and providing algorithms to compute random assignments that satisfy these desiderata.

\subsection{Contributions}
We study an extension of the random assignment problem~\cite{MB2001, MB2002, Abdulkadiroglu2003OrdinalEA} in which a decision maker must allocate a number of items to unit-demand agents in a way that is consistent with an \textit{uncertain priority} represented as a distribution over rankings of the agents. To the best of our knowledge, we are the first to characterize this more general problem. 

In general we want to compute a random assignment that is simultaneously \textit{efficient} with respect to agent preferences over the items and \textit{fair} with respect to the agent priorities. \textit{Ordinal efficiency} (OE)~\cite{MB2001} generalizes the concept of Pareto efficiency to the case of a random assignment. Our main contribution is to characterize two alternative notions of fairness for the random assignment problem with uncertain priorities in Section~\ref{sec:desiderata}. The first notion, which we call \textit{stochastic envy-freeness} (SEF), guarantees that any agent whose priority first-order stochastically dominates another agent's priority should prefer their own (random) assignment to that of the other agent. The second notion, which we call \textit{likelihood envy-freeness} (LEF), guarantees that the likelihood (over the random assignment) that an agent prefers the assignment of another should be at most the likelihood (over the uncertain priority) that the latter agent has higher priority than the former.  

We introduce additional notions that helps more finely distinguish between algorithms that satisfy one of the above notions. The first is a relaxation of LEF called 1-LEF that holds only when an agent has higher priority than another with probability 1.  The next is {\em ranked proportionality} (PROP), where the allocation of any agent should stochastically dominate the allocation where she gets her $i$-th preferred item with probability $p_i$ if she herself is ranked at position $i$ with that probability. 

Formal definitions are provided in Section~\ref{sec:desiderata}. We provide illustrative examples of these concepts as well as justification for why multiple definitions of fairness might be appropriate in Section~\ref{sec:need}. 

In Section~\ref{impossible} we show that it is impossible to guarantee OE and LEF simultaneously. We also show that it is impossible to guarantee SEF and LEF simultaneously. Given this, we focus on achieving OE and SEF. In Section~\ref{sec:algorithms} we describe two algorithms: \textit{Unit-time Eating} (UTE) and \textit{Cycle Elimination} (CE). We show that both of these algorithms satisfy OE and SEF. To more finely distinguish between these algorithms, we show that CE also satisfies the relaxed 1-LEF property, while UTE satisfies PROP. We also show that any algorithm achieving OE cannot achieve PROP and 1-LEF simultaneously, so that we cannot achieve a super-set of the properties achieved by these algorithms. 

It is straightforward to observe that the well known \textit{Random Serial Dictatorship} (RSD) that samples a priority from $\Sigma$ and then uses the serial dictatorship satisfies LEF, PROP, and is ex-post Pareto efficient, though it does not satisfy OE~\cite{MB2001}. We obtain a nearly complete characterization of achievable subsets of our efficiency and fairness properties, as shown in Table~\ref{tab0}.

\begin{table}[htbp]
\begin{center}
\begin{tabular}{||c || c |c | c | c |c ||} 
 \hline
 Algorithm & OE & SEF & LEF & 1-LEF & PROP \\
 \hline\hline
 RSD &  &  & \textbf{\checkmark} & \textbf{\checkmark} & \textbf{\checkmark} \\ 
 \hline
 UTE (new) & \textbf{\checkmark} & \textbf{\checkmark} &  & & \textbf{\checkmark}  \\
 \hline
 CE (new) & \textbf{\checkmark} & \textbf{\checkmark} &  & \textbf{\checkmark} & \\
 \hline
 \hline
\end{tabular}
\caption{\label{tab0} Summary of fairness properties achieved.}
\end{center}
\end{table}

In Section~\ref{sec:experiments} we return to a consideration of our motivating application of biased school admissions. We provide a practical example modeling an uncertain priority in the presence of bias and compare our CE and UTE algorithms with previous approaches to address bias using ``Rooney Rule'' style approaches~\cite{kleinberg2018, celis2020interventions}.

\subsection{Related Work}

\paragraph{Random Assignment.} There is a large body of work studying the problem of random assignment with no priority (or, in our framework, when the priority is uniform). The work of~\cite{randomserialdictatorship} proposed a {\em random serial dictatorship} mechanism, which draws an ordering of agents uniformly at random and let them choose items in that order, and showed that this mechanism is ex-post efficient. The work of~\cite{Zhou1990OnAC} observed that though random serial dictatorship is fair, it is not efficient when the agents are endowed with Von Neumann-Morgenstern preferences over lotteries. The work of~\cite{MB2001} introduced a notion of efficiency that is stronger than ex-post efficiency, namely {\em ordinal efficiency}, and showed that random serial dictatorship is not ordinally efficient. They proposed the {\em probabilistic serial} rule that is ordinally efficient. Moreover, probabilistic serial is (stochastically) envy-free while random serial dictatorship is not. The work of~\cite{Abdulkadiroglu2003OrdinalEA} studied the relationship between ex-post efficiency and ordinal efficiency, showing that a lottery induces an ordinally efficient random assignment if and only if each subset of the full support of the lottery is undominated (in a specific sense).

Subsequent works investigated natural extensions of the canonical setup. The work of~\cite{MB2002} considered the problem of random assignment in the case where agents can opt out, and characterised probabilistic serial by ordinal efficiency, envy-freeness, strategyproofness, and equal treatment of equals in this setting. The work of~\cite{rankefficiency} studied the notion of {\em rank efficiency}, which maximises the number of agents matched to their first choices. 

\paragraph{Fair Ranking.} The assignment problem with priority is closely related to the subset selection problem that has been studied extensively as a problem in fair ranking~\cite{kleinberg2018, CME19, NCGW20, celis2020interventions, EGGL20, MC21, GB21} where the goal is to optimize some latent measure of utility for the algorithm designer subject to group fairness constraints on the resulting ranking. Recent work considers explicitly modeling the uncertainty from bias when estimating a ranking based on observed utilities~\cite{SKJ21}, similar to our approach in modeling an uncertain priority. Our work differs from the fair ranking literature in that we study a more general assignment problem in which agents may not all have the same preferences over items. Of course, one can always translate a given ranking into an assignment by employing the serial dictatorship rule, but this need not be ordinally efficient~\cite{MB2001}. Instead, we formulate our desiderata more explicitly in the wider context of the assignment problem itself.  

\paragraph{Two-sided matching.} School choice problems are often studied in the context of two-sided matching, where applicants have preferences over schools and schools have preferences over applicants. For example, the deferred acceptance algorithm (and its extensions) calculates stable matchings and has been extensively studied and deployed in the real world~\cite{gale1962college, Roth82, Roth84, Atila03, Atila05}. Our problem is different in two ways. First, the ``items'' in our problem (eg., school seats) share a single common priority over applicants, so the notion of stability simply means no applicant of lower priority is assigned an item preferred by an agent of higher priority. However, our setting is more complex in the second sense: The shared priority is uncertain, and the assignment will be random, requiring an extension of existing fairness properties and algorithms.


\section{Preliminaries}
We are given $n$ unit demand agents $\mathcal{A} = \{1, 2, \hdots, n\}$ and a set of $m$ items $\mathcal{I}$. We assume without loss of generality that $m \geq n$ (if not, one can create additional ``dummy'' items that are least preferred by all agents). We write $a \succ_i b$ to denote that agent $i$ prefers item $a$ to item $b$. Each agent has ordinal preferences represented as a total order over $\mathcal{I}$, that is, for every agent $i$ we have a permutation $\pi_i: \mathcal{I} \rightarrow \{1, \hdots, n\}$ such that $\pi_i(a) < \pi_i(b)$ if and only if $a \succ_i b$.\footnote{In general, results extend trivially to the case where agents may have \textit{objective} indifferences between items, meaning that if any agent is indifferent between two items then all agents are indifferent between those items. 
However, our results do \textit{not} necessarily extend straightforwardly if agents have subjective indifferences, see~\cite{MB2001}.}

A \textit{simple priority} over agents is a permutation $\sigma: \mathcal{A} \rightarrow \{1, \hdots, n\}$ where $\sigma(i) < \sigma(j)$ means that $i$ has higher priority than $j$. A \textit{random priority} is a probability distribution over simple priorities which we denote as $\Sigma = \{(\sigma_k, \rho_k)\}$ where each $\sigma_k$ is a simple priority, $\rho_k \geq 0$, and $\sum_k \rho_k = 1$.

A \textit{simple assignment} is a matching $f: \mathcal{A} \rightarrow \mathcal{I}$. A \textit{lottery} is a probability distribution over simple assignments which we denote as $\mathcal{L} = \{(f_k, p_k)\}$ where each $f_k$ is a simple assignment, $p_k \geq 0$, and $\sum_k p_k = 1$.

Following~\cite{MB2001}, we call a probability distribution over $[m]$ itself a \textit{random allocation} to an agent. It is important to note that agents have ordinal preferences over deterministic items which only induces a partial order over random allocations. That is, given $\pi_i$, it may be unclear whether $i$ would prefer one random allocation to another. We denote by $P = \{p_{ij}\}$ a \textit{random assignment}, the $n$ by $m$ matrix where $P_{i}$, the $i$-th row, is agent $i$'s random allocation, and where $\sum_i p_{ij} = 1$ for all columns $j$. In general, a random assignment $P$ can be induced by one or more lotteries, the existence of which is guaranteed by the Birkhoff-von Neumann Theorem, but a particular lottery induces a unique random assignment $P$.

In the \textit{assignment problem with uncertain priorities} we are given a random priority $\Sigma$ and agent preferences $\{\pi_i\}$ and we must compute a random assignment.

\section{Desiderata}
\label{sec:desiderata}

In this section we introduce the normative properties that an algorithm for the random assignment with uncertain priorities problem should satisfy. Broadly speaking, these desiderata require that the algorithm be efficient with respect to agent preferences and fair with respect to agent priorities.

\subsection{Efficiency}
A simple assignment $f$ is \textit{Pareto efficient} (or Pareto optimal) if it is not dominated by any other simple assignment, which simply means that there is no alternative such that no agent is worse off and at least one agent is better off. 

\begin{definition}[Pareto Efficiency]
A simple assignment $f$ is Pareto efficient if for all simple assignments $g$ one of the following holds: (i) $\exists i \in \mathcal{A}$ such that $f(i) \succ_i g(i)$, or (ii) $g(i) \nsucc f(i)$ for all $i \in [n]$. 
\end{definition}

A lottery $\mathcal{L}$ is ex-post Pareto efficient if every simple assignment in the support of $\mathcal{L}$ (i.e., every simple assignment $f_k$ with $p_k > 0$) is Pareto efficient. 

A stronger efficiency property for a random assignment is \textit{ordinal efficiency} {\sc (OE)}~\cite{MB2001}. To define ordinal efficiency we must first define the notion of stochastic dominance.

\begin{definition}[Stochastic Domination]
A probability distribution $X$ \textit{stochastically dominates} another distribution $Y$ under permutation $\pi$ (denoted $X \succ^{sd}_{\pi} Y$) if for all $t \in \{1, \hdots, n\}$ it holds that $\sum_{r=1}^{t} X_{\pi^{-1}(r)} \geq \sum_{r=1}^{t} Y_{\pi^{-1}(r)}$, where $\pi^{-1}$ is the inverse permutation. A random assignment $P$ is stochastically dominated by a random assignment $Q \neq P$ if the random allocation induced by $Q$ stochastically dominates the random allocation induced by $P$ under preferences $\pi_i$ for every agent $i \in [n]$.
\end{definition}

Note that this implies the following: If random assignment $Q$ stochastically dominates random assignment $P$, then every agent prefers $Q$ to $P$ under any Von Neumann-Morgenstern utility function consistent with their ordinal preferences. Now we can define ordinal efficiency, following~\cite{MB2001}.

\begin{definition}[Ordinal Efficiency, OE]
We say that a random assignment $P$ is \textit{ordinally efficient} if it is not stochastically dominated by any other random assignment.
\end{definition}

At a high level, a random assignment is ordinally efficient if there is no other random assignment that is better for all agents and all utility functions consistent with their ordinal preferences. The property is not trivial: Some natural algorithms such as random serial dictatorship are Pareto efficient but not ordinally efficient.   

\subsection{Fairness}
\label{fairness}

We define fairness in terms of envy. We say that one agent \textit{envies} another if the former prefers the item assigned to the latter. Envy of a lower priority agent constitutes a justified complaint against an assignment; ideally we would like to compute an \textit{envy-free} assignment with respect to the priority.   

\begin{definition}[Envy-Freeness]
We say that a simple assignment $f$ is \textit{envy-free} with respect to a simple priority $\sigma$ if for all $i, j \in [n]$, $\sigma(i) < \sigma(j) \implies f(i) \succ_i f(j).$
\end{definition}

However, it is immediately evident that it is impossible to compute a single simple assignment that is envy-free in this sense for every simple priority in the support of a random priority (for example, if there are two agents with uncertain priority who both prefer the same item). Instead, we need to compute a random assignment so that each agent is fairly treated ex-ante (for example, so that each agent has a fair probability of receiving the preferred good).   

There are two natural ways to generalize the concept of envy to a random assignment with a random priority. One is to imagine that one agent envies another if the random allocation of the latter stochastically dominates that of the former under the former's ordinal preferences. Envy of this type forms a justified complaint if the envying agent also stochastically dominates the envied agent in terms of the random priority. More formally, 

\begin{definition}[Stochastic Envy-Freeness, SEF]
\label{def_sef}
Consider a random assignment $P$ generated under a random priority $\Sigma$. Let $S_i$ be the probability distribution over $[n]$ induced by $\Sigma$ for agent $i$, that is, for $r \in [n]$, $S_{ir} = \sum_{k: \sigma_k(i)=r} \rho_k$. Let $\sigma_{*}$ be the identity permutation, i.e., $\sigma_{*}(i) = i$. $P$ is \textit{stochastically envy-free} (SEF) with respect to $\Sigma$ if for all $i, j \in [n]$, $S_i \succ^{sd}_{\sigma_{*}} S_j \implies P_i \succ^{sd}_{\pi_i} P_j.$
\end{definition}

Loosely speaking, the implication of stochastic envy-freeness can be read as ``if agent $i$ probably has higher priority than $j$ then $i$ should prefer their random allocation to $j$'s under all utility functions consistent with $i$'s ordinal preferences.''

A second way to generalize envy is by considering the likelihood of envy (in the simple sense) with respect to a lottery inducing a given random assignment. Envy of this type is justified if the likelihood of agent $i$ envying another agent $j$ is greater than the likelihood over the random priority that $i$ has lower priority than agent $j$. We call a random assignment \textit{likelihood envy-free} if there is a lottery which induces it and has no envy of this kind.

\begin{definition}[Likelihood Envy-Freeness, LEF]
A random assignment $P$ satisfies \textit{likelihood envy-freeness} (LEF)  under  $\Sigma$ if $P$ can be induced by a lottery $\mathcal{L}$ such that for all $i, j \in [n]$, $\Pr_{\sigma \sim \Sigma}[\sigma(i) < \sigma(j)] \leq \Pr_{f \sim \mathcal{L}}[f(i) \succ_i f(j)].$
\end{definition}

In other words, LEF means that an agent $i$ who is $\ell$-likely to have higher priority than another agent $j$ should be at least $\ell$-likely to prefer their assigned item to $j$'s. 

We say an algorithm satisfies OE (resp. SEF, LEF) if it always produces random assignment that satisfies OE (resp. SEF, LEF). As we show in Section~\ref{impossible}, it is not possible to guarantee SEF and LEF simultaneously.

\subsection{Relationship between LEF and SEF}
\label{sec:need}
The relationship between SEF and LEF is subtle; neither implies the other and it is not immediately evident which is the ``better'' or more ``natural'' fairness property. We present two examples to illustrate that an assignment satisfying only one of SEF and LEF might still be unfair, so that both properties are useful competing notions of fairness, and neither is strictly stronger than the other.

We first present an example which shows that an assignment that satisfies SEF can be unfair. Consider $n = 2$ agents and $m = 2$ items which we label $a, b$ for clarity. Both agents prefer $a$ to $b$, and the random priority is simply $\Sigma = \{(\sigma, 1)\}$ with $\sigma(1) < \sigma(2)$, {\em i.e.} agent 1 has higher priority than agent 2 with probability $1$. In this setup, allocating $\frac{1}{2}$ unit of $a$ and $b$ to both agent yields an assignment that satisfies SEF. However, this assignment is clearly unfair, because even though agent 1 has higher priority than agent 2, they are getting the same assignment. Notice that this assignment does not satisfy LEF. In this instance, LEF could be used to characterize how much one agent is prioritized over the other.

The next example shows that an assignment that only satisfies LEF can also be unfair. Consider $n = 2$ agents and $m = 100$ items which we label $i_1, \ldots, i_{100}$ for clarity. The preferences of both agents are $i_1 \succ \cdots \succ i_{100}$. The random priority is given by $\Sigma = \{(\sigma_1, \frac{1}{2}), (\sigma_2, \frac{1}{2})\}$ with $\sigma_1(1) < \sigma_1(2)$ and $\sigma_2(2) < \sigma_2(1)$. In other words, both agents have the same priority. In this setup, allocating $\frac{1}{2}$ unit of $i_1$ and $\frac{1}{2}$ unit of $i_{100}$ to agent 1 and $\frac{1}{2}$ unit of $i_{99}$ and $\frac{1}{2}$ unit of $i_{100}$ to agent 2 yields an assignment that satisfies LEF. Notice that this assignment can be induced by a lottery $\mathcal{L} = \{(f_1, \frac{1}{2}), (f_2, \frac{1}{2})\}$ where $f_1(1) = i_1, f_1(2) = i_{100}, f_2(1) = i_{100}, f_2(2) = i_{99}$. However, this assignment is clearly unfair, because even though the two agents have the same priority, agent 1 gets a strictly better assignment than agent 2. This shows that LEF alone has limitations as well, and the appropriate concept here is SEF.

The above examples show that SEF and LEF provide reasonable competing notions of fairness. When combined with the efficiency notion of OE, we will show in Section~\ref{impossible} that LEF and OE are incompatible. If OE is replaced by the weaker notion of Pareto-efficiency, then it is easy to check that random serial dictatorship (RSD), which simply samples a priority of agents from the distribution and allocates each agent their favorite remaining item in this priority order,  satisfies LEF\footnote{To see why RSD satisfies LEF, suppose the random priority is given by $\Sigma = \{(\sigma_k, \rho_k)\}$, then the random assignment produced by RSD can be induced by the lottery $\mathcal{L} = \{(f_k, \rho_k)\}$, where $f_k$ is the deterministic assignment produced by letting agents successively choose an item based on the order given by $\sigma_k$.} and pareto efficiency. Thus, in our work, we will focus on the more non-trivial part of finding algorithms that satisfy SEF and OE. 




\subsection{Additional Fairness Criteria}
As we show in Section~\ref{sec:algorithms}, there can be multiple algorithms that satisfy the same subset of the fairness criteria. We therefore consider two additional notions to more finely distinguish between them. 

The first criterion is the following relaxation of LEF: If agent $i$ with probability $1$ has higher priority than another agent $j$ then agent $i$ should certainly (again, with probability 1) not envy $j$.

\begin{definition}[1-LEF]
A random assignment $P$ under random priority $\Sigma$ satisfies {\sc 1-LEF} if there exists some lottery $\mathcal{L}$ which induces $P$ such that for all agents $i \neq j \in [n]$, if $\Pr_{\sigma \sim \Sigma}[\sigma(i) < \sigma(j)] = 1$, then $\Pr_{f \sim \mathcal{L}}[f(i) \succ_i f(j)] = 1$.
\end{definition}

The next criterion is called {\em Ranked Proportionality} (PROP), which captures stochastic dominance over an allocation that matches the probability an agent gets her $i^{th}$ ranked item to the probability of she being ranked at position $i$. Note that if all rankings of agents were equally likely, this captures stochastic dominance to an allocation that assigns every item to every agent uniformly at random.

\begin{definition}[{\sc PROP}] Given a random priority $\Sigma = \{(\sigma_k, \rho_k)\}$, we define the baseline allocation $\overline{P}_i$ for agent $i$ by $\overline{P}_{i\pi_i^{-1}(r)} = S_{ir} = \sum_{k: \sigma_k(i) = r}\rho_k$ for all $r \in [n]$. In other words, if an agent $i$ ranks the $r$-th in the random priorities with probability $p$, then we add $p$ fraction of the $r$-th preferred item of agent $i$ to her baseline allocation. For an allocation to satisfy {\em ranked proportionality} ({\sc PROP}), it should stochastically dominate this baseline for each agent. 
\end{definition}


\section{Impossibility Results}
\label{impossible}

In this part, we present several impossibility results. We note that these are existential hardness results, not computational. We begin by observing that LEF is incompatible with OE. 

\begin{theorem}
\label{DOMOE}
 LEF is incompatible with OE.
\end{theorem}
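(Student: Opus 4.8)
The plan is to exhibit a single instance of the assignment problem with uncertain priorities for which no random assignment can simultaneously satisfy LEF and OE. I would look for the smallest nontrivial instance, $n = 2$ agents and $m = 2$ items $a, b$, where both agents prefer $a$ to $b$, and choose the random priority so that LEF forces a rigid constraint. The natural choice is the deterministic priority $\Sigma = \{(\sigma, 1)\}$ with $\sigma(1) < \sigma(2)$, so that agent $1$ has higher priority than agent $2$ with probability $1$, mirroring the first example in \Cref{sec:need}.

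With this instance, I would first unpack what LEF demands. Since $\Pr_{\sigma \sim \Sigma}[\sigma(1) < \sigma(2)] = 1$, any lottery $\mathcal{L}$ inducing an LEF assignment must satisfy $\Pr_{f \sim \mathcal{L}}[f(1) \succ_1 f(2)] \geq 1$, i.e.\ agent $1$ must receive a preferred item to agent $2$ in every simple assignment of the support. Because both agents prefer $a$ to $b$ and the two items must be handed out, this forces $f(1) = a$ and $f(2) = b$ in every $f_k$ with $p_k > 0$. Hence the only LEF-compatible random assignment is the deterministic one giving $a$ to agent $1$ and $b$ to agent $2$ with probability $1$. This step is essentially a counting argument and should be routine; the key observation is just that LEF with probability-$1$ priority collapses the lottery to a single simple assignment.

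Next I would check that this forced assignment is in fact ordinally efficient, so that LEF and OE are not trivially incompatible for a silly reason. Giving $a$ to agent $1$ and $b$ to agent $2$ is Pareto optimal (agent $1$ already has their top choice), so it is not stochastically dominated and is OE. This means a two-item instance with a deterministic priority does \emph{not} give the impossibility—LEF and OE coexist here. I therefore expect the real obstacle to be constructing an instance where the LEF-forced assignment is \emph{not} ordinally efficient, which requires a genuinely random priority (so LEF does not collapse to a single deterministic assignment) together with preferences that create an OE-improving exchange.

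The main difficulty, then, is the design of the right instance: I would use a random priority with at least two simple priorities in its support and agent preferences engineered so that the LEF constraints pin down probabilities on items in a way that leaves ``probability mass on the wrong items,'' i.e.\ an assignment that some other random assignment stochastically dominates for every agent. Concretely, I would aim for a configuration where two agents are tied in priority (each ranks above the other with probability $\tfrac12$) yet prefer overlapping items, so LEF forces a symmetric split that wastes efficiency, while a different (asymmetric-looking but still feasible) random assignment dominates it ordinally. Verifying OE-domination—exhibiting the dominating assignment and checking the stochastic-dominance inequalities $\sum_{r=1}^{t} Q_{\pi_i^{-1}(r)} \geq \sum_{r=1}^{t} P_{\pi_i^{-1}(r)}$ for all $t$ and all agents $i$, with strict inequality somewhere—is the step I expect to demand the most care, since one must confirm no LEF-inducing lottery can avoid the inefficiency while the dominating assignment remains a valid random assignment.
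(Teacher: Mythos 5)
Your proposal is not yet a proof: it correctly rules out the trivial two-agent instance and correctly names the shape of the target (an instance where the LEF constraints force probability mass onto ``wrong'' items, after which one exhibits a stochastically dominating reallocation), but the entire content of the theorem is the construction of that instance, and you leave it open. Worse, the one concrete mechanism you sketch for producing it does not work. You propose two agents tied in priority (each above the other with probability $\tfrac12$) with overlapping preferences, hoping LEF forces an inefficient symmetric split. It does not: with tied agents, the LEF constraints only require each agent to win the comparison with probability at least $\tfrac12$, and these are satisfied by a random-serial-dictatorship-style lottery. If the tied agents have identical preferences, the symmetric split is in fact undominated (any gain in cumulative mass for one agent is a loss for the other); if they have opposite preferences, the \emph{efficient} assignment (each gets their favorite with probability $1$) already satisfies both LEF constraints with probability $1 \geq \tfrac12$. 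Either way, no contradiction arises from ties alone.

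The missing idea, which is the heart of the paper's proof, is to use \emph{probability-one} priority relations between agents with \emph{identical} preferences to pin allocations to zero, and to arrange these constraints in a crossing pattern. The paper takes $n=4$: agents $1,3$ share preferences $a \succ b \succ c \succ d$, agents $2,4$ share $b \succ a \succ c \succ d$, and the priority puts $3$ above $1$ with probability $1$ and $4$ above $2$ with probability $1$ (while crossing the pairs so that, e.g., agent $2$ must beat agent $3$ with probability $\tfrac12$). The probability-one relations force $p_{1a}=0$ and $p_{2b}=0$; the half-probability relations force $p_{3a}<1$, so feasibility ($\sum_i p_{ia}=1$) pushes mass of $a$ onto agent $2$ or $4$ and mass of $b$ onto agent $1$ or $3$. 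Since agent $1$ prefers $a$ to $b$ and agent $2$ prefers $b$ to $a$, swapping this mass between agents $1$ and $2$ yields a random assignment that stochastically dominates $P$, contradicting OE. Without this two-layer device --- probability-one constraints creating zeros, plus feasibility forcing mass onto agents who rank those items lower --- your plan has no way to generate an assignment that is simultaneously LEF-forced and dominated, which is exactly the step you flag as ``the main difficulty'' but do not resolve.
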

\begin{proof}
We present an instance in which no random assignment can satisfy both LEF and OE. There are $n=4$ agents and $m=4$ items which we label $a, b, c, d$ for clarity. Agent preferences are given by
$$\pi_1, \pi_3: a \succ b \succ c \succ d, \quad \pi_2, \pi_4: b \succ a \succ c \succ d$$
Moreover, we consider the priority $\Sigma = \{(\sigma_1, \frac{1}{2}), (\sigma_2, \frac{1}{2})\}$ where
$$\sigma_1(4) < \sigma_1(2) < \sigma_1(3) < \sigma_1(1),$$
$$\sigma_2(3) < \sigma_2(1) < \sigma_2(4) < \sigma_2(2).$$
In other words, with probability $\frac{1}{2}$ under $\sigma_1$, agent 4 has the highest priority, then agent 2, then agent 3, finally agent 1. Similarly for $\sigma_2$ with probability $\frac{1}{2}$. Assume for contradiction that there exists a random assignment $P = [p_{ij}]$, together with a lottery $\mathcal{L}$ which induces $P$, satisfying LEF and OE. By definition of LEF, we note that
$$\Pr_{f \sim \mathcal{L}}[f(3) \succ_3 f(1)] \geq \Pr_{\sigma \sim \Sigma}[\sigma(3) < \sigma(1)] = 1,$$
so it must be that $\Pr_{f \sim \mathcal{L}}[f(3) \succ_3 f(1)] = 1$. Thus, we must have $p_{1a} = 0$, because otherwise there would exist a simple assignment in the lottery in which agent $1$ is assigned with $a$ and agent $3$ is assigned with some less preferred item under $\pi_3$. By the same reasoning, we note that $p_{2b} = 0$.

Also by definition of LEF, observe that
$$\Pr_{f \sim \mathcal{L}}[f(2) \succ_2 f(3)] \geq \Pr_{\sigma \sim \Sigma}[\sigma(2) < \sigma(3)] = \frac{1}{2}.$$
This implies $p_{3a} < 1$, as otherwise we  have $f(3) = a$ for all $f \sim \mathcal{L}$; combined with the fact that $p_{2b} = 0$, we would have $f(3) \succ_2 f(2)$ for all $f \sim \mathcal{L}$, which contradicts $\Pr_{f \sim \mathcal{L}}[f(2) \succ_2 f(3)] \geq \frac{1}{2}$. 

Since $p_{1a} = 0$, $p_{3a} < 1$, and $\sum_{i}p_{ia} = 1$, it follows that $p_{2a} + p_{4a} > 0$. Similarly, we have $p_{1b} + p_{3b} > 0$. Without loss of generality, we assume that $p_{2a} > 0$ and $p_{1b} > 0$ (if $p_{4a} > 0$ or $p_{3b} > 0$, the proof proceeds similarly). Let $p_{min} = \min(p_{2a}, p_{1b})$; define random assignment $Q = [q_{ij}]$ by
$$q_{ij} = 
\begin{cases}
p_{ij} \text{ if } i \notin \{1, 2\} \text{ and } j \notin \{a, b\}\\
p_{ij} + p_{min} \text{ if } (i, j) = (1, a) \text{ or } (2, b)\\
p_{ij} - p_{min} \text{ if } (i, j) = (1, b) \text{ or } (2, a)
\end{cases}
$$

We can see that $Q$ stochastically dominates $P$. In particular, all that is different in $Q$ is that agent 1 swaps agent 2 some of agent 2's allocated probability mass on item $a$ in exchange for an equivalent amount of agent 1's probability mass on item $b$. Since $a \succ_1 b$ and $b \succ_2 a$ and nothing else changes, agents 1 and 2 prefer $Q$, and nothing has changed for agents 3 and 4. This contradicts with the fact that $P$ satisfies OE. Thus, we can conclude that no random assignment in this instance satisfies LEF and OE.
\end{proof}

Theorem~\ref{DOMOE} can be interpreted as a fundamental tradeoff between efficiency and fairness conceived as LEF. Next, we show that LEF and SEF are two fundamentally different notions of fairness that are incompatible with one another. As we will see later in Section~\ref{sec:algorithms}, each of LEF and SEF independently can be guaranteed. Thus, neither notion of fairness is subsumed by the other. 

\begin{theorem}
\label{theorem2}
 LEF is incompatible with SEF.
\end{theorem}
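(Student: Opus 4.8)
The plan is to exhibit one instance on which no random assignment can be simultaneously SEF and LEF: assume for contradiction that a random assignment $P$, induced by a lottery $\mathcal{L}$, satisfies both, and show that the LEF inequalities pin $\mathcal{L}$ down so tightly that the resulting marginals must violate the equality SEF forces. The instance I would use has $n=m=3$ with items $a,b,c$, preferences $\pi_1=\pi_2: a\succ b\succ c$ and $\pi_3: b\succ a\succ c$, and the cyclic priority $\Sigma=\{(\sigma_1,\tfrac13),(\sigma_2,\tfrac13),(\sigma_3,\tfrac13)\}$ whose three simple priorities are the rotations $1{>}2{>}3$, $2{>}3{>}1$, $3{>}1{>}2$. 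The purpose of this priority is to decouple marginal rank from pairwise comparison: every agent is uniform over ranks, so $S_1=S_2=S_3$, yet $\Pr_{\sigma\sim\Sigma}[\sigma(1)<\sigma(2)]=\Pr_{\sigma\sim\Sigma}[\sigma(2)<\sigma(3)]=\Pr_{\sigma\sim\Sigma}[\sigma(3)<\sigma(1)]=\tfrac23$.

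First I would extract the SEF consequence. Since $S_1=S_2$ and $\pi_1=\pi_2$, weak stochastic dominance $S_1\succ^{sd}_{\sigma_*}S_2$ holds in both directions, so SEF demands $P_1\succ^{sd}_{\pi_1}P_2$ and $P_2\succ^{sd}_{\pi_1}P_1$ at once; under a common total order this forces $P_1=P_2$, and in particular $p_{1a}=p_{2a}$.

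Next I would write out what LEF requires of $\mathcal{L}$, parametrizing $\mathcal{L}$ by the weights it places on the six simple assignments $(f(1),f(2),f(3))$ of agents to $\{a,b,c\}$. For the pair $(1,2)$, who share a preference, the events ``$1$ prefers its own item'' and ``$2$ prefers its own item'' are complementary in every simple assignment, so the two LEF inequalities $\Pr_{f\sim\mathcal{L}}[f(1)\succ_1 f(2)]\ge\tfrac23$ and $\Pr_{f\sim\mathcal{L}}[f(2)\succ_2 f(1)]\ge\tfrac13$ force $\Pr_{f\sim\mathcal{L}}[f(1)\succ_1 f(2)]=\tfrac23$ exactly. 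The remaining LEF inequalities I would invoke are $\Pr_{f\sim\mathcal{L}}[f(2)\succ_2 f(3)]\ge\tfrac23$ and $\Pr_{f\sim\mathcal{L}}[f(3)\succ_3 f(1)]\ge\tfrac23$. Translating these three conditions into the matching weights, together with nonnegativity and $\sum_k p_k=1$, I expect them to have a unique solution, namely the lottery placing weight $\tfrac13$ on each of $(a,b,c)$, $(a,c,b)$, and $(c,a,b)$. That lottery yields $p_{1a}=\tfrac23$ but $p_{2a}=\tfrac13$, contradicting the SEF consequence $p_{1a}=p_{2a}$; hence no $P$ satisfies both properties.

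The main obstacle is not the arithmetic but locating the right instance. With homogeneous preferences the symmetric ``rotation'' lottery satisfies SEF and LEF simultaneously, so the construction must break that symmetry; giving agent $3$ the preference $b\succ a\succ c$ is precisely what turns the cross-pair inequalities for $(2,3)$ and $(3,1)$ into constraints that are jointly tight enough to pin $\mathcal{L}$ to a single point whose marginals are unbalanced. The step I would be most careful about is verifying that this subset of LEF inequalities already forces a \emph{unique} lottery, rather than leaving slack that a cleverer $\mathcal{L}$ could exploit to restore $P_1=P_2$: concretely I would show that $w_1+w_3+w_5\ge\tfrac23$ and $w_2+w_5+w_6\ge\tfrac23$ (the translations of the $(2,3)$ and $(3,1)$ conditions), combined with the forced equality $w_1+w_2+w_4=\tfrac23$, nonnegativity, and $\sum_k w_k=1$, admit only the solution above, so every LEF lottery on this instance fails SEF.
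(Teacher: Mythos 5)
Your proof is correct, but it takes a genuinely different route from the paper's. The paper exhibits a five-agent, five-item instance with a priority supported on two rankings, and its argument interleaves the two properties: probability-one priority comparisons (via LEF) force $p_{1a}=p_{2b}=0$, SEF (via $S_1\succ^{sd}_{\sigma_*}S_2$) then forces $p_{2a}=p_{1b}=0$, further LEF inequalities pin $p_{1c}=p_{2c}=\tfrac12$, and a final LEF constraint on a fifth agent yields the contradiction. Your instance is smaller ($n=m=3$) and isolates the conceptual source of the incompatibility: a Condorcet-style cyclic priority makes all marginal rank distributions $S_i$ identical, so SEF --- which reads only these marginals --- forces agents $1$ and $2$ (who share preferences) to receive identical allocations, while every pairwise comparison probability equals $\tfrac23$, so LEF --- which reads pairwise statistics --- forces asymmetric treatment. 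I verified the details: the complementarity of the events for the identically-preferring pair $(1,2)$ does force $w_1+w_2+w_4=\tfrac23$ exactly; summing the two constraints $w_1+w_3+w_5\ge\tfrac23$ and $w_2+w_5+w_6\ge\tfrac23$ and substituting $w_1+w_2=\tfrac23-w_4$, $w_3+w_6=\tfrac13-w_5$ gives $w_5\ge\tfrac13+w_4$, which together with $w_3+w_5+w_6=\tfrac13$ pins $w_3=w_4=w_6=0$ and $w_1=w_2=w_5=\tfrac13$; the induced marginals $p_{1a}=\tfrac23\neq\tfrac13=p_{2a}$ then contradict the equality $P_1=P_2$ that SEF imposes. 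Beyond compactness, your approach buys something the paper's does not: it shows the LEF-compatible lottery (hence assignment) on your instance is \emph{unique}, making the clash with SEF maximally transparent; the paper's larger instance, in exchange, shows the incompatibility persists even when some agents' rank distributions strictly dominate others'.
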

\begin{proof}
We present an instance in which no random assignment can satisfy both LEF and SEF. There are $n=5$ agents and $m=5$ items which we label $a, b, c, d, e$ for clarity. Preferences are given by
\begin{align*}
&\pi_1, \pi_3: a \succ b \succ c \succ d \succ e, \quad \pi_2, \pi_4: b \succ a \succ c \succ d \succ e,\\
&\pi_5: a \succ c \succ b \succ d \succ e. 
\end{align*}
We consider the priority $\Sigma = \{(\sigma_1, \frac{1}{2}), (\sigma_2, \frac{1}{2})\}$ defined by
$$\sigma_1(3) < \sigma_1(5) < \sigma_1(1) < \sigma_1(4) < \sigma_1(2),$$
$$\sigma_2(4) < \sigma_2(5) < \sigma_2(2) < \sigma_2(3) < \sigma_2(1).$$

In other words, with probability $\frac{1}{2}$ under $\sigma_1$, agent 3 has the highest priority, then agents 5, 1, 4, and finally 2. Similarly for $\sigma_2$.

Assume for contradiction that there exists a random assignment $P = [p_{ij}]$, together with a lottery $\mathcal{L}$ which induces $P$, that satisfies LEF and SEF. Since agent 3 always has higher priority than agent 1 and agent 3 prefers $a$ over all other items, LEF implies that $p_{1a} = 0$. Similarly, since agent 4 always has higher priority than agent 2 prefers $b$ over all other itmes, LEF implies that $p_{2b} = 0$.

Recall that $S_i$ is the probability density over $[n]$ induced by $\Sigma$ for agent $i$ and $\sigma^*$ is the identity permutation. Since $S_1 \succ_{\sigma^*}^{sd} S_2$ by construction and $P$ satisfies {\sc (SEF)} by assumption, we have $P_1 \succ_{\pi_1}^{sd} P_2$. Combined with the fact that $p_{1a} = 0$, we must have $p_{2a} = 0$. Similarly, $p_{1b} = 0$.

We next show $p_{1c} = p_{2c} = \frac{1}{2}$. First, observe that LEF guarantees
$$\Pr_{f \sim \mathcal{L}}[f(4) \succ_4 f(2)] \geq \Pr_{\sigma \sim \Sigma}[\sigma(4) < \sigma(2)] = 1,$$
Thus, since $e$ is the least preferred item by agent 4, we must have $p_{4e} = 0$. Also by LEF, we have
$$\Pr_{f \sim \mathcal{L}}[f(1) \succ_1 f(4)] \geq \Pr_{\sigma \sim\Sigma}[\sigma(1) < \sigma(4)] = \frac{1}{2},$$
i.e. $\Pr_{f \sim \mathcal{L}}[f(1) \succ_1 f(4)] \geq \frac{1}{2}$. On the other hand, since $p_{4e} = 0$, the worst item that agent 4 can get under $\pi_4$ is $d$, so
$$\Pr_{f \sim \mathcal{L}}[f(1) \succ_1 f(4)] \leq p_{1a} + p_{1b} + p_{1c} = p_{1c},$$ since we earlier found that $p_{1a} = p_{1b} = 0$.
 Recall $\Pr_{f \sim \mathcal{L}}[f(1) \succ_1 f(4)] \geq \frac{1}{2}$, we get $p_{1c} \geq \frac{1}{2}$. Similarly, we have $p_{2c} \geq \frac{1}{2}$. Since $\sum_{i}p_{ic} = 1$, it must be the case that $p_{1c} = p_{2c} = \frac{1}{2}$. We deduce that for any $f \sim \mathcal{L}$, either $f(1) = c$ or $f(2) = c$, because on one hand, for any fixed $f$, we should have $f(1) \neq f(2)$, while on the other hand, $p_{1c} + p_{2c} = 1$.

Observe that $p_{5a} \leq \frac{1}{2}$. This follows directly from LEF, because
$$\Pr_{f \sim \mathcal{L}}[f(3) \succ_3 f(5)] \geq \Pr_{\sigma \sim \Sigma}[\sigma(3) < \sigma(5)] = \frac{1}{2};$$
if $p_{5a} > \frac{1}{2}$, we would have $\Pr_{f \sim \mathcal{L}}[f(3) \succ_3 f(5)] < 1 - p_{5a} = \frac{1}{2}$, leading to contradiction. What's more, we have $p_{5c} = 0$, since we already have $p_{1c} + p_{2c} = 1$.

On one hand, we should have $\Pr_{f \sim \mathcal{L}}[f(5) \succ_5 f(1)$ and $f(5) \succ_5 f(2)] = 1,$ since $\sigma_i(5) < \sigma_i(1)$ and $\sigma_i(5) < \sigma_i(2)$ for $i \in \{1, 2\}$; but on the other hand, we have
$\Pr_{f \sim \mathcal{L}}[f(5) \succ_5 f(1)$ and  $f(5) \succ_5 f(2)] \leq p_{5a} \leq \frac{1}{2},$
because for any $f \sim \mathcal{L}$, either $f(1) = c$ or $f(2) = c$, so $f(5) \succ_5 f(1)$ and $f(5) \succ_5 f(2)$ if and only if $f(5) = a$. This leads to contradiction. Thus, LEF and SEF are incompatible.
\end{proof}

We finally show that OE, 1-LEF, and PROP are simultaneously incompatible. This will inform the design of algorithms in Section~\ref{sec:algorithms}.

\begin{lemma}
\label{lem:imposs3}
There is an instance where no allocation simultaneously satisfies OE, 1-LEF, and PROP.
\end{lemma}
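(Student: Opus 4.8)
The plan is to exhibit a single explicit instance and argue that the constraints imposed by 1-LEF and PROP already pin down enough of any feasible random assignment to create a Pareto-improving exchange between two agents, contradicting OE. The engine of the contradiction is the same trading argument used in \cref{DOMOE}: if some agent $i$ with $a \succ_i b$ holds positive mass on $b$ while some agent $j$ with $b \succ_j a$ holds positive mass on $a$, then swapping $\min(p_{ib},p_{ja})$ units of probability between them strictly improves both agents and changes nothing for anyone else, so the assignment is stochastically dominated and cannot be OE. My whole task is therefore to design preferences and a random priority so that 1-LEF and PROP \emph{together} force $p_{1b}>0$ and $p_{2a}>0$ for two agents $1,2$ with opposed top preferences.

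For the instance I would take $n=m=4$ with items $a,b,c,d$, preferences $\pi_1: a\succ b\succ c\succ d$, $\pi_2: b\succ a\succ c\succ d$, and two ``blocker'' agents $\pi_3$ (top choice $a$) and $\pi_4$ (top choice $b$). I choose a priority $\Sigma=\{(\sigma_1,\tfrac12),(\sigma_2,\tfrac12)\}$ in which agent $3$ always outranks agent $1$ and agent $4$ always outranks agent $2$, but arranged so that agent $1$ sits at rank $2$ under $\sigma_1$ and agent $2$ sits at rank $2$ under $\sigma_2$; concretely $\sigma_1: 3\prec 1\prec 4\prec 2$ and $\sigma_2: 4\prec 2\prec 3\prec 1$. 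The first use of 1-LEF is exactly as in \cref{DOMOE}: since $\Pr_\Sigma[\sigma(3)<\sigma(1)]=1$ and $a$ is agent $3$'s favorite, any witnessing lottery must keep agent $3$ happier than agent $1$ in every realization, forcing $p_{1a}=0$; symmetrically $p_{2b}=0$. This is consistent with PROP only because the priorities are built so that agents $1,2$ are never ranked first, i.e.\ $S_{11}=S_{21}=0$, so PROP does not demand any mass on $a$ for agent $1$ or on $b$ for agent $2$.

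The second, crucial, use is PROP. Because agent $1$ is ranked second with probability $\tfrac12$ we have $S_{12}=\tfrac12$, and the top-two partial-sum inequality of $P_1\succ^{sd}_{\pi_1}\overline{P}_1$ reads $p_{1a}+p_{1b}\ge S_{11}+S_{12}=\tfrac12$; combined with $p_{1a}=0$ this gives $p_{1b}\ge\tfrac12>0$. The symmetric computation for agent $2$ gives $p_{2a}\ge\tfrac12>0$. Now the trading argument applies to agents $1$ and $2$ on items $a,b$, contradicting OE, which proves the lemma. The part that takes the most care is the joint design of the two permutations: I must simultaneously guarantee the ``always dominates'' relations that make 1-LEF delete the top entries ($p_{1a}=p_{2b}=0$), keep $S_{11}=S_{21}=0$ so that 1-LEF and PROP are not already in conflict on their own, and still leave enough probability mass at rank $2$ for the blocked agents that PROP forces strictly positive mass on the swappable pair. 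Everything after that, namely the two stochastic-dominance inequalities and the verification that the swapped matrix $Q\neq P$ is a valid random assignment dominating $P$ (unchanged for agents $3,4$, strictly better for $1,2$), is routine.
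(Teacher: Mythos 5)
Your proposal is correct and follows essentially the same route as the paper: the paper reuses the instance from \cref{DOMOE}, and your construction is that exact instance (your $\sigma_1,\sigma_2$ are the paper's two permutations with labels swapped, which is immaterial since both carry probability $\tfrac12$), with the identical chain 1-LEF $\Rightarrow p_{1a}=p_{2b}=0$, PROP $\Rightarrow p_{1b},p_{2a}\ge\tfrac12$, then the swap contradicting OE. No gaps; the only difference is presentational, in that you rederive the instance from design principles rather than citing the earlier proof.
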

\begin{proof}
We use the instance in the proof of Theorem~\ref{DOMOE}. Since agent $3$ is ranked higher than agent $1$ with probability $1$ and since their preferences over items is identical, 1-LEF implies agent $1$ is allocated item $a$ with probability $0$. Now PROP implies agent $1$ receives item $b$ with probability at least $1/2$. By a similar reasoning, agent $2$ must get item $a$ with probability at least $1/2$. But any such allocation cannot be OE, completing the proof.
\end{proof}

\section{Algorithms}
\label{sec:algorithms}
As we have seen, LEF is a very strong notion of fairness which is incompatible with both OE and SEF. In the following, we present two algorithms -- {\em cycle elimination} (CE) and {\em unit time eating} (UTE) -- that satisfy both OE and SEF. In addition, we show that CE satisfies 1-LEF and UTE satisfies PROP. Given Lemma~\ref{lem:imposs3}, we cannot design an algorithm that achieves OE and both these properties.

Therefore,  both CE and UTE are reasonable fair allocation algorithms in that they satisfy efficiency (OE) and envy-freeness (SEF). The choice of which to implement depends on whether we care more about a form of proportionality in the resulting allocation (UTE satisfies PROP) or whether we care about additional envy-freeness in a deterministic sense (CE satisfies 1-LEF).



\subsection{Cycle Elimination algorithm}
We first introduce a \emph{Cycle Elimination algorithm} (CE), which works by constructing a directed graph based on the random priority, and allocate items based on this graph.

To begin with, we introduce the \emph{Probabilistic Serial rule}~\cite{MB2001}, a continuous algorithm which works as follows. Initially, each agent $i$ goes to their favorite item $j$ and starts ``eating'' it (that is, increasing $p_{ij}$) at unit speed. It is possible that several agents eat the same item at the same time. Whenever an item is fully eaten, each of the agents eating it goes to their favorite remaining item not fully allocated (that is, $\sum_{i} p_{ij} < 1$) and starts eating it in the same way. This process continues until all items are consumed, or all the agents are full (that is, $\sum_{j} p_{ij} = 1$). We use {\sc PS($\mathcal{A}$, $\mathcal{I}$)} to denote the assignment produced by running Probabilistic Serial rule on the set of agents $\mathcal{A}$ and items $\mathcal{I}$. 


We construct a graph from $\Sigma$, which we call a Stochastic-Dominance graph (SD-graph), as follows: Start with a graph with $n$ vertices, where the $i$-th vertex corresponds to the $i$-th agent. For any pair of distinct agents $i$ and $j$, if $S_i \succ_{\sigma^*}^{sd} S_j$, then we draw a directed edge from $i$ to $j$. The algorithm is now formally stated in Algorithm~\ref{cycle}.

\begin{algorithm}[htbp]
\SetAlgoLined
{\bf Input:} Set of agents $\mathcal{A}$, set of items $\mathcal{I}$, SD-graph $G$\;
Let $\hat{G}$ be the condensation\footnotemark of $G$\;
Let $\widetilde{\mathcal{A}}$ be the set of agents that belong to a strongly connected component whose in-degree in $\hat{G}$ is zero\;

 \eIf{$\mathcal{A} = \widetilde{\mathcal{A}}$}{
   Output {\sc PS($\mathcal{A}$, $\mathcal{I}$)}\;
 }
 {
    $\mathcal{A}' \leftarrow \mathcal{A} \setminus \widetilde{\mathcal{A}}$; $\mathcal{I}'\leftarrow \mathcal{I}\: \setminus$ {\sc PS($\widetilde{\mathcal{A}}$; $\mathcal{I}$)}; $G' \leftarrow G \setminus \widetilde{\mathcal{A}}$\;
    Output {\sc PS($\widetilde{\mathcal{A}}$, $\mathcal{I}$)} + Eliminate($\mathcal{A}'$, $\mathcal{I}'$, $G'$)\;
 }
 \caption{Cycle Elimination, Eliminate($\mathcal{A}$, $\mathcal{I}$, $G$)}
 \nllabel{cycle}
\end{algorithm}
\footnotetext{Condensation of a graph is a directed acyclic graph formed by contracting each strongly connected component to a single vertex.}

\paragraph{Analysis.} Our main result is the following theorem.

\begin{theorem}
\label{thm:main1}
The Cycle Elimination algorithm satisfies OE, SEF, and 1-LEF. It runs in $O(n^3 + nm + n|\Sigma|)$ time.
\end{theorem}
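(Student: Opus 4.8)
The plan is to establish the four assertions—OE, SEF, 1-LEF, and the running time—separately, after first recording the structural fact that drives all of them: peeling off the in-degree-zero strongly connected components is exactly a topological layering of the condensation $\hat G$. Writing $L_1, L_2, \ldots$ for the successive sets $\widetilde{\mathcal A}$ removed by the recursion, I would first observe that any edge $i \to j$ of the SD-graph (i.e.\ $S_i \succ^{sd}_{\sigma^*} S_j$) forces $i$'s component to be an ancestor-or-equal of $j$'s in $\hat G$, so $i$'s layer index is $\le j$'s, and strictly smaller when $i,j$ lie in different components. I would also record the supply bookkeeping: since $m \ge n$, the remaining item supply at the start of each layer is at least the number of remaining agents, so the Probabilistic Serial (PS) call for each layer runs for a full unit of time and every agent in the layer is completely filled.

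For OE I would invoke the Bogomolnaia–Moulin characterization: a random assignment is ordinally efficient iff there is no improving cycle $o_1 \to \cdots \to o_k \to o_1$ with agents $i_\ell$ such that $p_{i_\ell o_\ell} > 0$ and $o_{\ell+1} \succ_{i_\ell} o_\ell$. I would attach to each item a global exhaustion label $\tau(o) = (\text{layer in which } o \text{ is exhausted}, \text{time within that layer})$, ordered lexicographically. The crux is the PS invariant: an agent $i_\ell$, which lives in a single layer, eats $o_\ell$ with positive probability only after every item it prefers to $o_\ell$ has zero remaining supply at that moment; hence the preferred item $o_{\ell+1}$ is already exhausted, giving $\tau(o_{\ell+1}) < \tau(o_\ell)$. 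Summed around the cycle this yields $\tau(o_1) < \tau(o_1)$, a contradiction, so CE is OE.

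For SEF I would handle same-layer and cross-layer pairs separately. Two agents in the same $L_s$ are served by one PS call, so PS's stochastic envy-freeness gives $P_i \succ^{sd}_{\pi_i} P_j$ directly. The cross-layer lemma—an agent $i$ in an earlier layer than $j$ satisfies $P_i \succ^{sd}_{\pi_i} P_j$—is the heart of the argument. Fixing $t$ and the prefix $T$ of $i$'s top-$t$ items, I would argue a dichotomy: if some item of $T$ is never exhausted during $i$'s layer then $i$ eats only $T$-items for the whole unit of time, so $\sum_{r \le t} P_{i\pi_i^{-1}(r)} = 1$; otherwise all of $T$ is exhausted within $i$'s layer, so it has zero supply for every later layer and $j$ receives none of it, i.e.\ $\sum_{r\le t} P_{j\pi_i^{-1}(r)} = 0$. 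Either way the prefix inequality holds for all $t$, and combining with the structural fact that an SD-graph edge points from a weakly-earlier to a weakly-later layer finishes SEF.

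For 1-LEF, suppose $\Pr_{\sigma}[\sigma(i)<\sigma(j)] = 1$. Coupling the ranks shows $\sigma(i) < \sigma(j)$ pointwise, so $S_i \succ^{sd}_{\sigma^*} S_j$ and, since the strict pointwise inequality precludes $S_i = S_j$, the two agents lie in distinct components and $i$ occupies a strictly earlier layer. I would then prove the deterministic separation that the worst item $i$ receives (under $\pi_i$) is at least as good as the best item $j$ receives (under $\pi_i$): otherwise that best item of $j$ is ranked better than some item $i$ actually eats, forcing it to be exhausted during $i$'s layer, contradicting its survival into $j$'s strictly later layer. Because every item in $i$'s support then weakly beats every item in $j$'s support and the two cannot share the single boundary item in one simple assignment, every simple assignment of any lottery inducing $P$ has $f(i) \succ_i f(j)$, so 1-LEF holds for any Birkhoff–von Neumann decomposition. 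Finally, for the running time I would tally: computing $S$ costs $O(n|\Sigma|)$; forming the SD-graph by comparing all $O(n^2)$ prefix-sum pairs costs $O(n^3)$; condensation and layering cost $O(n^2)$; and the PS calls cost $\sum_s (|L_s| + m)|L_s| = O(n^2 + nm)$, for a total of $O(n^3 + nm + n|\Sigma|)$. I expect the cross-layer dominance lemma, together with making the global exhaustion order well defined when items are carried partially eaten from one layer into the next, to be the main obstacle, since this is precisely where the interaction between the PS subroutine and the layered priority structure must be controlled.
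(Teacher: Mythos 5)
Your proposal is correct and follows the same overall strategy as the paper's proof: view CE as a layered sequence of PS calls determined by peeling in-degree-zero components, get SEF from within-layer PS envy-freeness plus a cross-layer dominance argument, get 1-LEF from the fact that an agent in a strictly earlier layer finishes eating before a later agent starts (so the later agent's support cannot contain anything the earlier agent strictly prefers to an item she eats), and the same running-time accounting. The main difference is self-containedness rather than route: where the paper disposes of OE by citing Theorem 1 of Bogomolnaia--Moulin (any simultaneous eating algorithm is ordinally efficient, and CE is one, with layer-$s$ agents eating at rate $1$ on $[s-1,s)$), you reprove that fact via the improving-cycle (acyclicity) characterization using a global exhaustion-time potential; and where the paper compresses the SEF case analysis into one line (``either way, $P_i \succ^{sd}_{\pi_i} P_j$''), you actually prove the cross-layer dichotomy lemma (either some top-$t$ item survives $i$'s layer and $i$'s prefix mass is $1$, or all of the prefix is exhausted and $j$ gets none of it), together with the supply bookkeeping from $m \ge n$ that makes each layer run a full unit of time. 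Your version is longer but makes explicit exactly the steps the paper delegates to citation or leaves implicit, and both filled-in arguments are sound.
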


\begin{proof}
Theorem 1 in~\cite{MB2001} states that any simultaneous eating algorithm where each agent always eats from her favorite remaining item satisfies OE. Hence, CE satisfies OE.

To show SEF, fix two agents $i$ and $j$, and assume $S_i \succ^{sd}_{\sigma_{*}} S_j$. Let $P$ be the random assignment produced by CE. We show that $P_i \succ^{sd}_{\pi_i} P_j$. Since $S_i \succ^{sd}_{\sigma_{*}} S_j$, there exists an edge from $i$ to $j$ in the SD-graph. Thus, $i$ and $j$ either belong to the same strongly connected component, or the strongly connected component of $i$ has higher topological order than that of $j$'s. Either way, we have $P_i \succ^{sd}_{\pi_i} P_j$, from which we can conclude that CE satisfies SEF.

To show 1-LEF, fix two agents $i$ and $j$, and assume $\Pr_{\sigma \sim \Sigma}[\sigma(i) < \sigma(j)] = 1$. Let $P$ be the random assignment produced by CE. We show that, for any lottery $\mathcal{L}$ inducing $P$, we have $\Pr_{f \sim \mathcal{L}}[f(i) \succ_i f(j)] = 1$. We use proof by contradiction. Assume that there exists a lottery $\mathcal{L}_0$ which induces $P$ such that $\Pr_{f \sim \mathcal{L}_0}[f(i) \succ_i f(j)] <  1$. This implies that there exists two items $a$ and $b$ such that $a \succ_i b$, $p_{ib} > 0$ and $p_{ja} > 0$. On the other hand, since $\Pr_{\sigma \sim \Sigma}[\sigma(i) < \sigma(j)] = 1$, so we have $S_i \succ^{sd}_{\sigma_{*}} S_j$; thus, the strongly connected component that agent $i$ belongs to must have higher topological order than that of $j$'s. By CE, agent $j$ could start eating only when agent $i$ is completely full. Thus, $p_{ja} > 0$ implies that there is still item $a$ remaining when agent $i$ finishes eating; this leads to contradiction, because $a\succ_i b$ implies that $i$ could eat $a$ instead of $b$. Therefore, we must have $\Pr_{f \sim \mathcal{L}}[f(i) \succ_i f(j)] = 1$ for all lotteries $\mathcal{L}$ which induces $P$, from which we can conclude that CE satisfies 1-LEF.

To show the running time, preprocessing $\Sigma$ in order to compute the stochastic dominance relation between agents takes $O(n|\Sigma| + n^3)$ time. Constructing the SD-graph by the stochastic dominance relation between agents takes $O(n^2)$ time, as there are $n\choose{2}$ pair of agents. Given the SD-graph, running CE takes $O(nm)$ time. This is because we only need to consider at most $m$ time points: the time at which each item is eaten up. We divide this process into $m$ time intervals. During each time interval, each agent keeps eating the same item, so it simply takes $O(n)$ time to keep track of the state of each agent, and the running time over $m$ intervals is $O(nm)$. Hence, the total running time is $O(n^3 + nm + n|\Sigma|)$.
\end{proof}

\subsection{Unit-time Eating Algorithm}
We next introduce the \emph{Unit-time Eating Algorithm} (UTE). Recall that $\Sigma = \{(\sigma_k, \rho_k)\}$. Essentially, the algorithm works by dividing the time into $n$ units, each of duration one; in time unit $t$, the $t$-th ranked agent in $\sigma_k$ eats their favorite item among those left over at rate $\rho_k$ for all $k$. The procedure is formally stated in Algorithm~\ref{unit}.

\begin{algorithm}[htbp]
\SetAlgoLined

\For{$t = 1, \ldots, n$}{
The $t$-th ranked agent in each $\sigma_k$ eats their favorite item among those left over at rate $\rho_k$ for all $(\sigma_k, \rho_k) \in \Sigma$\;
}
 \caption{Unit-time Eating Algorithm}
 \nllabel{unit}
\end{algorithm}

\paragraph{Analysis.} We show the following theorem.

\begin{theorem}
\label{thm:main2}
The Unit-time Eating Algorithm satisfies OE, SEF, and PROP. Further, it runs in $O(n^2|\Sigma| + nm)$ time.
\end{theorem}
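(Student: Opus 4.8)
The plan is to handle OE, PROP, and SEF in turn, all resting on a single description of the eating dynamics, and then to sketch the running time. First I would record how UTE eats: during time unit $t$ (the interval $(t-1,t]$) agent $i$ consumes her favorite not-yet-exhausted item at rate $S_{it}=\sum_{k:\sigma_k(i)=t}\rho_k$, so summing over agents the total eating rate at every instant is $\sum_i S_{it}=\sum_k\rho_k=1$ (each scheduled agent always finds an available item, since strictly fewer than $n\le m$ units are consumed before time $n$). Hence exactly $\theta$ units are consumed in total by any time $\theta$. For OE, I would observe that UTE is a simultaneous eating algorithm with the (time-varying, possibly zero) speeds $S_{i\lceil t\rceil}$ — after rescaling time to $[0,1]$ — in which every agent always eats her favorite remaining item; the cited Theorem~1 of~\cite{MB2001}, invoked exactly as in the CE proof, then gives ordinal efficiency.

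The two fairness properties both revolve around, for a fixed agent $i$ and threshold $t'$, the top set $T=\{\pi_i^{-1}(1),\dots,\pi_i^{-1}(t')\}$ and the time $\theta=\theta(t')$ at which the last item of $T$ is exhausted (taking $\theta=n$ if $T$ never empties). Two facts are immediate from the dynamics: (a) since $T$ has $t'$ items and total consumption reaches $t'$ only at time at least $t'$, we have $\theta\ge t'$; and (b) once $T$ is empty it stays empty, so ``$T$ is available'' holds exactly on the time-prefix $[0,\theta]$. For PROP, whenever an item of $T$ remains, agent $i$'s favorite remaining item lies in $T$, so on $[0,t']\subseteq[0,\theta]$ agent $i$ consumes only $T$-items and accumulates $\int_0^{t'}S_{i\lceil t\rceil}\,dt=\sum_{r=1}^{t'}S_{ir}$ of them; thus $\sum_{r=1}^{t'}P_{i\pi_i^{-1}(r)}=\sum_{a\in T}P_{ia}\ge\sum_{r=1}^{t'}S_{ir}=\sum_{r=1}^{t'}\overline P_{i\pi_i^{-1}(r)}$, which is precisely $P_i\succ^{sd}_{\pi_i}\overline P_i$.

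For SEF I would fix $i\ne j$ with $S_i\succ^{sd}_{\sigma_*}S_j$ and the same $T,\theta$. While $T$ is available, agent $i$ consumes $T$ at rate exactly $S_{i\lceil t\rceil}$ (her favorite remaining is always in $T$), whereas agent $j$ consumes $T$ at rate at most $S_{j\lceil t\rceil}$ (equality only when $j$'s favorite remaining happens to fall in $T$); after $\theta$ neither touches $T$. Hence $\sum_{a\in T}P_{ia}=\int_0^{\theta}S_{i\lceil t\rceil}\,dt$ and $\sum_{a\in T}P_{ja}\le\int_0^{\theta}S_{j\lceil t\rceil}\,dt$, and it remains to prove the key inequality $\int_0^{\theta}S_{i\lceil t\rceil}\,dt\ge\int_0^{\theta}S_{j\lceil t\rceil}\,dt$ for an arbitrary $\theta$. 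Writing $\theta=s+f$ with $s=\lfloor\theta\rfloor$ and $f\in[0,1)$, this integral equals the convex combination $(1-f)\sum_{r=1}^{s}S_{ir}+f\sum_{r=1}^{s+1}S_{ir}$ of two prefix sums, and $S_i\succ^{sd}_{\sigma_*}S_j$ dominates each of these prefix sums, giving the inequality. Combining, I obtain $\sum_{r=1}^{t'}P_{i\pi_i^{-1}(r)}\ge\sum_{r=1}^{t'}P_{j\pi_i^{-1}(r)}$ for every $t'$, i.e.\ $P_i\succ^{sd}_{\pi_i}P_j$, which is SEF.

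I expect the SEF step to be the main obstacle: unlike OE (immediate from the cited theorem) and PROP (immediate once $\theta\ge t'$), it requires noticing that $j$'s $T$-consumption rate is only an upper bound $S_{j\lceil t\rceil}$, that $T$-availability is a time-prefix $[0,\theta]$, and then converting the rank-level dominance $S_i\succ^{sd}_{\sigma_*}S_j$ into an inequality of integrals taken up to the data-dependent stopping time $\theta$ via the convex-combination identity. For the running time I would precompute every $S_{it}$ and the rank-$t$ agent of each $\sigma_k$ in $O(n|\Sigma|)$ time, then simulate the $n$ time units: maintaining favorite-remaining pointers along agents' preference lists contributes the $O(nm)$ term, while resolving, in each of the $n$ units, the eating of the $|\Sigma|$ scheduled contributions across successive item-exhaustion events contributes $O(n^2|\Sigma|)$, for the claimed $O(n^2|\Sigma|+nm)$; the event bookkeeping itself is routine.
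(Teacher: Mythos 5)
Your proposal is correct and follows essentially the same route as the paper's proof: OE via Theorem~1 of~\cite{MB2001}, SEF by equating agent $i$'s top-$k$ mass with her total consumption up to the time her top set is exhausted, bounding agent $j$'s top-$k$ mass by $j$'s total consumption up to that same time, and comparing the two via the convex combination of prefix sums implied by $S_i \succ^{sd}_{\sigma_*} S_j$ (the paper's inequality~(\ref{eq3})), with PROP following from the same identity plus the bound $\theta \ge t'$ from the unit total eating rate. If anything, your treatment is slightly more careful than the paper's, since you define $\theta$ as the exhaustion time of the whole top set $T$ rather than of the single item $\pi_i^{-1}(k)$, which is what the equality for agent $i$'s consumption actually requires.
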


\begin{proof}
By Theorem 1 in~\cite{MB2001}, we have UTE satisfies OE.

To show SEF, fix two agents $i$ and $j$; assume that $S_i \succ^{sd}_{\sigma_{*}} S_j$. Let $P$ be the random assignment produced by UTE, we show that $P_i \succ^{sd}_{\pi_i} P_j$. Let $t_k$ be the time when item $\pi_i^{-1}(k)$ has been eaten up. Fix some $k \in [m]$; because $S_i \succ^{sd}_{\sigma_{*}} S_j$, we have
$$    \sum_{t = 1}^{\lfloor t_k\rfloor} S_{it} \geq \sum_{t = 1}^{\lfloor t_k\rfloor} S_{jt} \qquad  \mbox{and} \qquad   \sum_{t = 1}^{\lceil t_k\rceil} S_{it} \geq \sum_{t = 1}^{\lceil t_k\rceil} S_{jt} $$



Combining these gives
\begin{equation}
\label{eq3}
\big(t_k - \lfloor t_k\rfloor\big)S_{i\lceil t_k\rceil} + \sum_{t = 1}^{\lfloor t_k\rfloor} S_{it} \geq \big(t_k - \lfloor t_k\rfloor\big)S_{j\lceil t_k\rceil} + \sum_{t = 1}^{\lfloor t_k\rfloor} S_{jt}.    
\end{equation}

Observe that
$$\sum_{r = 1}^k P_{i\pi_i^{-1}(r)} = \big(t_k - \lfloor t_k\rfloor\big)S_{i\lceil t_k\rceil} + \sum_{t = 1}^{\lfloor t_k\rfloor} S_{it},$$
$$\sum_{r = 1}^k P_{j\pi_i^{-1}(r)} \leq \big(t_k - \lfloor t_k\rfloor\big)S_{j\lceil t_k\rceil} + \sum_{t = 1}^{\lfloor t_k\rfloor} S_{jt},$$
which gives $\sum_{r = 1}^k P_{i\pi_i^{-1}(r)} \geq \sum_{r = 1}^k P_{j\pi_i^{-1}(r)}$. Because this holds for all $k \in [m]$, we conclude that $P_i \succ^{sd}_{\pi_i} P_j$. Hence, UTE satisfies SEF.

To show PROP, fix some agent $i$. Suppose the allocation produced by UTE for this agent is $P_i$, and the baseline allocation for this agent is $\overline{P}_i$. We will show that $P_i \succ_{\pi_i}^{sd} \overline{P}_i$. Let $t_k$ be the time when item $\pi_i^{-1}(k)$ has been eaten. Fix some $k \in [m]$. Clearly, we have $t_k \geq k$, because in order to eat up $\pi_i^{-1}(k)$, we have to eat up $\pi^{-1}(r)$ for all $r < k$. We observe that
$$\sum_{r = 1}^k P_{i\pi_i^{-1}(r)} = \big(t_k - \lfloor t_k\rfloor\big)S_{i\lceil t_k\rceil} + \sum_{t = 1}^{\lfloor t_k\rfloor} S_{it}.$$
Combined with $t_k \geq k$, we have 
$$\sum_{r = 1}^k P_{i\pi_i^{-1}(r)} \geq \sum_{t = 1}^{k} S_{it} = \sum_{r = 1}^k\overline{P}_{i\pi_i^{-1}(r)}.$$
Because this holds for all $k \in [m]$, we conclude that that $P_i \succ_{\pi_i}^{sd} \overline{P}_i$, and hence UTE satisfies PROP.

To show running time, preprocessing $\Sigma$ to obtain the eating speed of each agent in each unit time interval takes $O(n^2|\Sigma|)$ time. Then, running UTE takes $O(nm)$ time, as we similarly only need to consider at most $m$ time points and keeping track of the state of each agent in each time interval takes $O(n)$ time. Therefore, the total running time is $O(n^2|\Sigma| + nm)$.
\end{proof}

\section{Generating Random Priorities and Empirical Results}
\label{sec:experiments}

In this section, we will demonstrate how one could obtain random priorities in practical settings using an example of school admission under implicit bias. In several environments based on such generative model, we will compare our proposed algorithms, namely Cycle Elimination (CE) and Unit-time Eating (UTE), with other common bias mitigating allocation algorithms such as ``the Rooney Rule''~\cite{celis2020interventions}. We empirically demonstrate that all existing algorithms induce stochastic envy. To show this, using the same notation as in Definition~\ref{def_sef}, we say a pair of agents $i$ and $j$ form a {\em stochastic envy pair} if $Z_i\succ_{\sigma_*}^{sd} Z_j$ but $P_i \nsucc_{\pi_i}^{sd} P_j$, and we will count the number of stochastic envy pairs produced by each algorithm.

\newcommand{\major}{advantaged\ }
\newcommand{\minor}{disadvantaged\ }

\subsection{Random Priority in School Admission}
\label{sec:experiments-priority-gen}
Consider a group of $N$ students, including $n$ \minor students with indices $\cbr{1,\dots, n}$ and $N-n$ \major students with indices $\cbr{n+1, \ldots, N}$. Suppose that they are competing for admission priorities of $\ell$ schools with capacities $c_1,\dots, c_\ell\in \NN$ that $\sum_{i=1}^\ell c_i = N$, in which process disadvantaged students are subjected to implicit bias on their capability. We will quantify the effect of implicit bias in the experiments. 
This is equivalent to allocating $N$ items to $N$ agents, where items correspond to seats, and the agents' preferences are their school choices. 

Denote the $j$-th seat of school $i$ as $s_{i_j}$, then the set of seats is $S\triangleq \bigcup_{i=1}^\ell\{s_{i_1}, s_{i_2},\dots, s_{i_{c_i}}\}$. For any ordinal preference $\tilde\pi:[\ell]\to[\ell]$ over the schools, it induces an ordinal preference $\pi:S\to[N]$ over the seats such that for any $s_{i_j}\in S$, $\pi(s_{i_j}) = j + \sum_{k:\tilde\pi(k)<\tilde\pi(i)}c_k.$ In other words, if a student prefers school $a$ to school $b$, then all seats of school $a$ are preferred over the seats of school $b$. For the seats in the same school, smaller indices are preferred. In the following, we describe how random priorities over students are generated.


For each student, we assign a ``capability score" $x_i$ that is drawn from the same distribution $\cD$, and students with higher capability score should have higher priority. Moreover, assume every student from the \minor group is subjected to a multiplicative implicit bias $b_i$, which is independently sampled from some distribution $\cB$. A \minor student with capability score $x_i$ is perceived to have a biased score $\hat x_i \triangleq b_ix_i$. We will also consider {\em additive bias} $\hat x_i \triangleq x_i+b_i$ in our experiments. The admission committee make decisions based on the perceived scores (which are biased for disadvantaged students and equal to the true scores for advantaged students). 

\begin{figure*}
\centering
\begin{tabular}{cc|cccccc}
\hline
\hline
$\ell$ & $\beta$ & N     & RN    & R   & RR  & CE & UTE \\ \hline
    & 0.2     & 3.4   & 0     & 3.4 & 10  & 0  & 0   \\
1   & 0.5     & 1.2   & 0     & 1.2 & 10  & 0  & 0   \\
    & 0.8     & 0.6   & 0     & 0.6 & 10  & 0  & 0   \\ \hline
    & 0.2     & 14.3  & 42.8  & 2.6 & 3.8 & 0  & 0   \\
2   & 0.5     & 14.5  & 42.8  & 1.0   & 3.8 & 0  & 0   \\
    & 0.8     & 19.7  & 42.8  & 0.6 & 3.8 & 0  & 0   \\ \hline
    & 0.2     & 88.8  & 175.7 & 1.6 & 2.5 & 0  & 0   \\
3   & 0.5     & 98.9  & 175.7 & 0.7 & 2.5 & 0  & 0   \\
    & 0.8     & 103.5 & 175.7 & 0.4 & 2.5 & 0  & 0   \\ \hline
\hline
\end{tabular}
\qquad
\begin{tabular}{cc|cccccc}
\hline
\hline
$\ell$ & $\beta$ & N     & RN    & R    & RR   & CE & UTE \\ \hline
         & 0.2       & 7.0   & 0     & 15.4 & 25.4 & 0  & 0   \\
1        & 0.5       & 7.3   & 0     & 6.2  & 36.9 & 0  & 0   \\
         & 0.8       & 7.8   & 0     & 2.8  & 42.2 & 0  & 0   \\ \hline
         & 0.2       & 38.2  & 36.2  & 11.2 & 16.8 & 0  & 0   \\
2        & 0.5       & 37.0  & 38.3  & 4.5  & 22.9 & 0  & 0   \\
         & 0.8       & 37.4  & 39.6  & 2.2  & 25.6 & 0  & 0   \\ \hline
         & 0.2       & 156.2 & 183.3 & 7.3  & 9.8  & 0  & 0   \\
3        & 0.5       & 141.4 & 200.5 & 3.4  & 12.6 & 0  & 0   \\
         & 0.8       & 127.6 & 205.5 & 1.9  & 15.5 & 0  & 0   \\ \hline
\hline
\end{tabular}
    
\caption{\label{tab:add}\label{tab:mult} Number of stochastic envy pairs under multiplicative bias (left) and additive bias (right).} 
\end{figure*}


For each experiment, we fix a set of unbiased capability scores $\cbr{x_i}_{i = 1}^N$ for the students, where $x_i\overset{\mathrm{iid}}{\sim} \cD$. Then, we take $n$ bias parameters $\{b_i\}_{i = 1}^n$ independently from $\cB$. The perceived scores of the students are $\cbr{\hat x_1,\dots, \hat x_{n}, x_{n+1}, \dots, x_N}$, where $\hat{x_i} = b_ix_i$. Now imagine we are the admission committee. We know $\cB$, $\cD$, and the perceived scores of the students. The goal is to approximately recover the underlying true scores of the students. To do this, we compute a posterior distribution for the bias factor of each disadvantaged student given $\cB$, the biased score of this student, and $\cD$. Concretely, the density of the posterior distribution for the bias factor of the $i^{\text{th}}$ disadvantaged student, which we denote by $\bm{b_i}$, is $f_{\bm{b_i}}(b) = \frac{f_{\cB}(b)f_{\cD}(\hat{x}_i/b)}{\int_{0}^\infty f_{\cB}(u)f_{\cD}(\hat{x}_i/u)du}$. Given $\{\bm{b_i}\}_{i = 1}^n$, we independently draw $q$ sets of bias parameters for disadvantaged students, where we denote the $j^{\text{th}}$ set of bias parameters as $\{b_i^{(j)}\}_{i = 1}^n$, {\em i.e.} $b_i^{(j)}\overset{\mathrm{iid}}{\sim} \bm{b_i}$. Let the ordinal relationship induced by $\{b_i^{(j)}\}_{i = 1}^n$ be $\sigma^{(j)}$. We consider the random priority $\{(\sigma^{(j)}, \frac{1}{q})\}_{j = 1}^q$. We denote the random priority induced by $q$ sets of bias parameters as $\Sigma^{(q)}$.

\subsection{Algorithms for Comparison}
To compare with CE and UTE, we consider four alternative solutions to the allocation problem under implicit bias. Fix a set of biased scores $\cbr{\hat x_1,\dots, \hat x_{n}, x_{n+1}, \dots, x_N}$, let $\hat\sigma$ denote its induced ordinal relationship. For a deterministic priority $\sigma$ over the students and ordinal preferences $\Pi\triangleq\{\pi_i\}_{i\in [N]}$ of students over the seats, let $GS(\sigma, \Pi)$ denote the deterministic assignment produced by the Gale-Shapley algorithm~\cite{gale1962college} which produces a stable matching between students and seats.

The algorithms that we compare with are as follows: 
\begin{enumerate}
    \item \emph{Naive Stable Matching} (N) takes deterministic  priority $\hat\sigma$ and returns the assignment $P_{\text{N}}(\hat\sigma,\Pi)\triangleq GS(\hat\sigma,\Pi)$. 
    \item \emph{Random Naive Stable Matching} (RN) takes the random priority $\Sigma^{(q)} = \cbr{(\sigma_i, p_i)}_{i = 1}^q$ and outputs a lottery based on (N), namely $ \cbr{(P_{\text{N}}(\sigma_i,\Pi), p_i)}_{i = 1}^q$.
    \item \emph{Rooney Stable Matching} (R) takes in the deterministic priority $\hat\sigma$ as input. Using the Rooney constraint in Theorem 3.3 of~\cite{celis2020interventions}, it  creates a new priority $\hat\sigma_{\text{R}}$. We present this formally in Algorithm~\ref{alg:rooney-like}.
Using $\hat\sigma_{\text{R}}$, \emph{Rooney Stable Matching} returns the assignment $P_{\text{R}}(\hat{\sigma}_{\text{R}},\Pi)\triangleq GS(\hat\sigma_{\text{R}}, \Pi)$.

\item \emph{Random Rooney Stable Matching} (RR) takes the random priority $\Sigma^{(q)} = \cbr{(\sigma_i, p_i)}_{i = 1}^q$  and outputs a lottery based on (R), namely $ \cbr{(P_{\text{R}}(\sigma_i,\Pi), p_i)}_{i = 1}^q$.
\end{enumerate}

\begin{algorithm}[htbp]
\SetAlgoLined
Let $A,B$ be the ordered sub-sequences of \minor and \major candidates in $\hat\sigma$ respectively, {\em i.e.} $p<q\iff \hat\sigma(A[p])<\hat\sigma(A[q])$\;
$i,j\gets 0$\;
\While{$i+j<N$}{
    \eIf{$\lfloor \frac{i}{i + j}\rfloor < \frac{n}{N} \normalfont{\textbf{ or }} \hat\sigma(i) < \hat\sigma(n+j)$}
    {$\hat\sigma_{\text{R}}(A[i]) = i+j$ and    $i\gets i+1$\;}
    {$\hat\sigma_{\text{R}}(B[j]) = i+j$ and    $j\gets j+1$\;}
}
\Return $\hat\sigma_{\text{R}}$
\caption{Proportional Rooney-rule-like Constraint~\cite{celis2020interventions}}
\nllabel{Ronnie}
\label{alg:rooney-like}
\end{algorithm}

\subsection{Prevalence of Stochastic Envy}
We now demonstrate that with random priority induced by the generative model described in Section \ref{sec:experiments-priority-gen}, stochastic envy exists for the bias mitigating algorithms N, RN, R, RR.

We consider an admission problem with $\ell$ schools each with $\lfloor \frac{N}{\ell + 1} \rfloor$ seats. Every student $i\in [N]$ has a uniformly random preference order over the $\ell$ schools. There is also a "dummy school" with $N-\ell\lfloor \frac{N}{\ell + 1} \rfloor$ seats representing no admission. Every student prefers seats in the dummy school the least. For seats in the same school, all students have the same preference order. This represents the situation in which schools may distribute educational resources to students based on their rank when admitted.

We take $N = 35$ and $n=10$, and experiment with $\ell = 1, 2, 3$. For each choice of $k$, we experiment with $\beta = 0.2, 0.5, 0.8$. For multiplicative bias, we take $\mathcal{D} = \text{Exponential}(1)$ and $\cB = \text{Exponential}(\beta)$; for additive bias, we take $\cD=\text{Uniform}(0,2)$ and $\cB=\text{Uniform}(0,\beta)$. 
Figure~\ref{tab:mult} presents  the number of stochastic envy pairs for each algorithm averaged over 100 experiments.  For each experiment, the random priority is computed with 1000 sets of bias parameters.  

Except for RN in the one school setting
, stochastic envy exists in all other scenarios for N, RN, R, RR. While empirically Rooney-rule-like constraints do significantly reduce the number of stochastic envy pairs compared to applying no mitigation mechanism at all, we still need CE or UTE to obtain guaranteed SEF.

\section{Conclusion}
We conclude with some open questions. First, even though SEF and LEF are incompatible, we do not know whether they can be compatible under certain natural generative assumptions on the random priorities and agent preferences. Second, it is known~\cite{MB2001} that OE (and hence CE and UTE) is incompatible with strategyproofness under natural assumptions. However, it is interesting to explore whether SEF alone is also incompatible with strategy-proofness. Finally, can our framework be extended to the scenario where the agent preferences are random as well, {\em i.e.} each agent reports a distribution over preferences instead of a deterministic preference?

\newpage

\bibliographystyle{plain}
\bibliography{refs.bib}

\end{document}